\newtheorem{theorem}{Theorem}
\newtheorem{property}{Property}
\newtheorem{lemma}{Lemma}
\newtheorem{definition}{Definition}
\title{The Saturated Subpaths Decomposition in $\mathbb{Z}^2$:\\a short note on generalized Tangential Cover}
\author{Fabien~Feschet\\%
		\small Universit\'e Clermont Auvergne, CNRS, SIGMA Clermont,\\
		\small Institut Pascal, F-63000 CLERMONT-FERRAND, FRANCE\protect\\
		\small E-mail: fabien.feschet@dg-medical.fr
	   }
\date{}
\begin{document}
\maketitle

\begin{abstract}
In this short note, we generalized the Tangential Cover used in Digital Geometry in order to use very general geometric predicates. We present the required notions of saturated $\alpha$-paths of a digital curve as well as \textit{conservative} predicates which indeed cover nearly all geometric digital primitives published so far. The goal of this note is to prove that under a very general situation, the size of the Tangential Cover is linear with the number of points of the input curve. The computation complexity of the Tangential Cover depends on the complexity of incremental recognition of geometric predicates. Moreover, in the discussion, we show that our approach does not rely on connectivity of points as it might be though first.\\

\noindent Keywords: Tangential Cover, digital curve, conservative predicates.
\end{abstract}

\section{Introduction}
The Tangential Cover was first introduced as an algorithm in \cite{FT99} and fully recognized as a fundamental structure in subsequent publications \cite{FF05,Jaco05}. Basically the Tangential Cover was the set of maximal Digital Straight Segments (DSS) which can be built on an input digital curve. Since its complexity is linear, computing the Tangential Cover is practically feasible and permits to tackle various problems on digital curves such as tangent computations or polygonalizations. The tangents obtained using the Tangential Cover does indeed converge towards their real counterpart for convex function. If the Tangential Cover was introduced using only DSS, it has also been used on extensions of DSS, such as thick segments \cite{Debxx}. Moreover, thanks to the local notion of meaningful thickness \cite{kerxx}, the Tangential Cover has been made adaptive allowing to perform analysis of noisy digital curves using automatic multi-scale selection \cite{Debyy}. It has been also included in the DGtal library \cite{DGtal}. As it can be seen in the comments of the description of the source code of the DGtal library, extensions of the Tangential Cover is not straightforward in any case. In this paper, we explain what is a good notion of geometric predicates and proves that the Tangential Cover can be built with general predicates called conservative predicates still resulting in a linear output size complexity. We moreover make a careful clarification of the input of the Tangential Cover recalling that the Tangential Cover works for non simple curves as well as non connected sets of points. We hope that this short note will clarify the usage of the Tangential Cover. 

\section{Background notations}
We consider, in this paper, points in the digital plane $\mathbb{Z}^2$
and some adjacency $\alpha$ on those points. The usual adjacencies are
the classical 4- and 8-adjacencies \cite{KleRos04}. Let us recall that
if we denote by $\mathcal{N}()$ a norm in $\mathbb{R}^2$, they are
defined by the fact that two points $p$ and $q$ are adjacent if and
only if $\mathcal{N}(q-p) \leq 1$. The norm $\mathcal{N}()$ is the
$l_1$ norm - corresponding to the sum of the absolute differences of
the coordinates of the points - for the 4-adjacency and it is the
infinite norm $l_{\infty}$ - corresponding to the max of the absolute
differences of the coordinates of the points - for the 8-adjacency. In
the sequel, the adjacency is fixed. Of course, our work readily
extends to the case of other cellular decompositions of the real plane
$\mathbb{R}^2$ such as the triangular or the hexagonal decomposition
but we restrict our presentation to the classical rectangular grid.

A digital $\alpha$-path, shortly a digital path, in the plane is a
list of points $\left(p_0,\ldots,p_n\right)$ of $\mathbb{Z}^2$ such
that $p_i$ and $p_{i+1}$ are $\alpha$-adjacent and $p_i\neq
p_{i+1}$. It must be noted that a path is indeed an ordered set of
points of $\mathbb{Z}^2$ such that two consecutive points are
$\alpha$-adjacent. Hence, for a given set of points, they may be
several paths describing it depending on the chosen ordering. A path
is closed if and only if $p_0$ and $p_n$ are $\alpha$-adjacent. When a
path is closed, indices of the points in the path are intended modulo
$n+1$, the length of the path. For a path $P$, a subpath is a sublist
$(p_i,\ldots,p_j)$ of points of $P$ which is an $\alpha$-path. Care
must be taken that $\alpha$-connected subsets of $P$ might not be
subpaths (see Fig. \ref{fig:subpath}) since we impose on each subpath
to keep all points between $p_i$ and $p_j$ with respect to the
ordering of the initial list describing the reference path.

\begin{figure}
  \centering
  \scalebox{0.75}{%
    \includegraphics{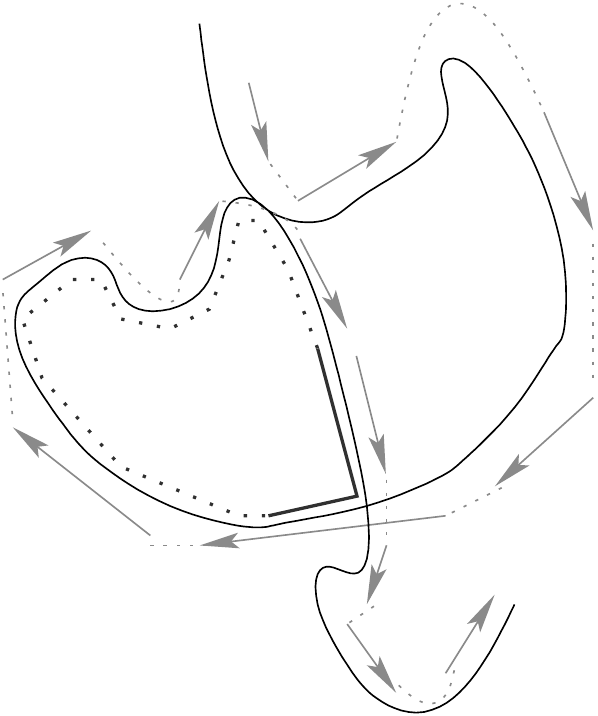}%
  }
  \caption{An $\alpha$-path with the order given by the arrows. The
    $\alpha$-connected subpart depicted in bold line is not a subpath
    since the dotted part (in bold) is missing in the set.}
  \label{fig:subpath}
\end{figure}

For an $\alpha$-path $P$, either closed or not, we denote by
$\mathcal{SP}(P)$ the set of all subpaths of $P$. This set contains
$O(n^2)$ elements by construction which is far less than the number
($O(2^n)$) of subsets of $P$.

A closed digital curve is a digital path where each point $p_i$ has
exactly two $\alpha$-adjacent points in the path. An open digital
curve is a digital path with the same property except for the points
$p_0$ and $p_n$ who have only one $\alpha$-adjacent point in the
path. These notions avoid the possibility of self-intersections and
are, thus, too restrictive in practice. Hence, otherwise precisely
mentioned, all digital sets manipulated in the paper will be digital
paths.

\begin{figure}
  \centering
      \scalebox{0.45}{%
        \includegraphics{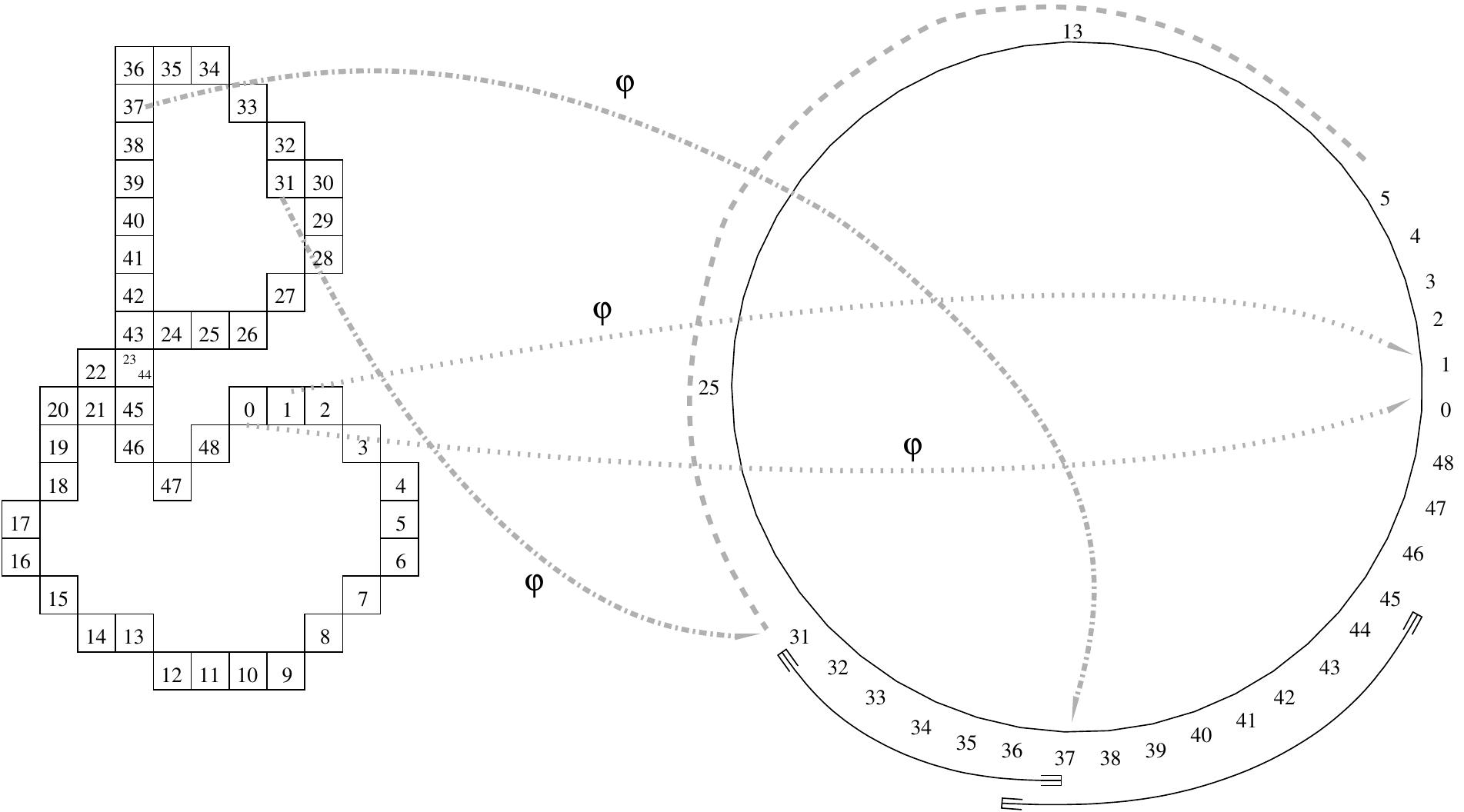}%
      }
  \caption{(left) An $\alpha$-path with the order written inside the
    pixels (right) The mapping of the points and the overlapping
    (31,37) and (36,45) subpaths.}%
  \label{fig:defs}
\end{figure}

Our main goal is to represent digital path as a graph structure. Since we only
consider subpaths, they are always connected subsets by definition. So, we
have a natural mapping $\varphi$ between any $\alpha$-path and a circular arc
graph as follows. To do this, we fix an $\alpha$-path $P$ given by its list of
points $\left(p_0,\ldots,p_n\right)$. Every point in the path is sent via the
mapping $\varphi$ to a point on the unit circle via the formula: $\varphi(p_k)
= \exp(2i\pi k/(n+1))$ where $i^2 = -1$. This mapping can then be extended to
map paths onto arcs of the circle. Indeed, any subpath of $P$ is given by a
list of consecutive points $\left(p_{i},\ldots,p_{j}\right)$ as shown on
Fig.~\ref{fig:defs}. The $\varphi$-image of the list is an arc given by
$\left(\varphi(p_{i}),\varphi(p_{j})\right)$. The $\varphi$ mapping permits us
to map any $\alpha$-path onto a circular arc graph. This graph is an interval
graph only when the path is open. It should be noted that $\varphi$ is an
inclusion preserving mapping that is, if $S$ is a subpath of $P$ then
$\varphi(S)$ is included in $\varphi(P)$.

As it can be seen in the definition of an $\alpha$-path, the first
problem to solve in our approach is to build an $\alpha$-path from an
image. Then, the $\varphi$ mapping can be used to construct a graph
from the path. Hence, in the next section, we explain in detail how to
build our reference $\alpha$-path from an image.

\section{From images to path}
The notion of $\alpha$-path is intrinsically related to the notion of ordering
of points. Indeed, we can define the \textbf{canonical} extension of an
$\alpha$-path $P$ corresponding to the list of points
$\left(p_0,\ldots,p_n\right)$ by the piecewise linear function
$P_{\text{can}}$ corresponding to the real polygon passing through the points
from $p_0$ to $p_n$ and parametrized on $[0,1]$. As each consecutive points
are different, this clearly defines a function for which $P_{\text{can}}(0) =
p_0$ and $P_{\text{can}}(1) = p_n$. Hence, this function is continuous,
point-wise and preserves the order between the real segment $[0,1]$ and the
path $P$. As such, this function is a Jordan line \cite{Frechet28}~(pp. 149-150) and \cite{Alexandrov89} (chapter 1). So, when starting from an image, the only problem to solve is
to build an ordering of the points in the image as this completely defines a
Jordan line. W.l.o.g., we can suppose that there is only one connected
component with respect to the $\alpha$-adjacency in the image because if it is
not, all connected components can be treated separately.

We base our approach on the curve tracing algorithm with self-intersections
considered by Hajdu and Pitas \cite{Haj08}. The key point of their approach
is the notion of junction \cite{KleRos04,Kle06}. We now briefly present their
approach as well as its straightforward rewriting for our goal of constructing
$\alpha$-paths. It should be noted that as is, the method proposed by Hajdu and
Pitas \cite{Haj08} is in fact somewhat imprecise as we will see later in this
paper. This is mainly due to the fact that they do not consider complex
junctions. We have corrected this little drawback in our version of their
method and consequently still refer to it as Hajdu and Pitas' method.

The classification of simple curve is based on the notion of
\textbf{branching} index. For a point $p\in\mathbb{Z}^2$, we consider $V(p)$
as the set of points $q \in\mathbb{Z}^2$ such that $p$ and $q$ are
$\alpha$-adjacent. $V(p)$ is the $\alpha$-neighbourhood of $p$. The branching
index of $p$ is then the number of points of $q$ which also are image
point. The classification of points in an image depending on their branching
index \cite{KleRos04} is the following:
\begin{itemize}
\item {$p$ is an \textit{end} point if its branching index is 1,}
\item {$p$ is a \textit{regular} point if its branching index is 2,}
\item {$p$ is a \textit{branching} point if its branching index is $\geq 3$.}
\end{itemize}
An $\alpha$-junction is an $\alpha$-connected subset of the image such that
each point in the junction is a branching point and such that it is maximal
for the inclusion order\footnote{The maximality with respect to inclusion is
  essential in the definition of a junction but is missing in \cite{Haj08} and
  in \cite{Kle06} even if it is evident from their papers that they both do
  consider maximal junctions.}. Hence, if a junction $J_1$ contains a junction
$J_2$ then $J_1 = J_2$. By extension, the branching index of a junction is the
number of regular points which are $\alpha$-adjacent to a point in the
junction. If we consider a binary image $I$
containing only one $\alpha$-connected components and if we suppress all
junctions in $I$, we get a simplified image $\tilde{I}$. This image might have
several $\alpha$-connected components but each connected component is a simple
open curve. This process gives a decomposition of the image $I$ into simple
open curves and junctions. This naturally leads to a graph $G=(V,E)$ as a
representation of $I$ as follows:
\begin{itemize}
\item {$V$ is composed of all end points and all junctions,}
\item {$E$ is the set of undirected edges $\{u,v\}$ with $u$ and $v$ two
    vertices such that $u$ and $v$ are $\alpha$-adjacent in $I$.}
\end{itemize}
The key point in Hajdu and Pitas' algorithm is to remark that an $\alpha$-path in
the image $I$ is thus a path in the graph $G$ where each edge is used only
once. Of course, this path might cross a junction more than once. So, the
construction of an $\alpha$-path is just the decomposition of the graph $G$
into Eulerian subgraphs for which it has been proved long time ago that they
always have an Eulerian tour, that is a path taking every edge exactly
once. They also proposed a way of creating only one $\alpha$-path for the
whole connected component in $I$ if repetition is allowed, that is some parts
are used back and forth resulting in the multiple use of non branching points
in the image (see part VI in \cite{Haj08}). This last problem was shown to be
the Chinese Postman Problem, a well known problem in graph theory
\cite{Schrij03} (chap. 29 vol A).

Originally Hajdu and Pitas' method is based on straight line segments
and minimal length paths between two open curves sharing a
junction. This is natural since their goals is also to apply a polygonal
approximation algorithm on the image to efficiently compress
it. However, two points are missing in their algorithm. First, they do
not fully explain how they consider self-loop in the graph
$G$. Indeed, self-loops cannot
be distinguish in the graph $G$, hence they do not have any guarantee
that their algorithm for constructing Eulerian tours will be able to
take them in the right geometric order. This problem can have
practical drawbacks but it is not a fundamental problem, to our point
of view. The second problem with their method is that it is
incorrect in full generality. Indeed, when the Eulerian tour as been
constructed, they built an 8-connected path between pairs of open
curves. However,
the junction is not the union of the paths traversing it. Hence, only
a subpart of the junction is coded by shortest paths between in and
out points of a junction. It should be noticed that the JBEAM
algorithm \cite{Huo05}, cited in \cite{Haj08}, does not have this
problem. However, it is not suitable for our purpose since it does not
consider paths but sets of points.

\section{$\alpha$-paths decomposition}
The first step in our method consists in a decomposition of an
$\alpha$-path $P$ into meaningful subparts, such as geometric
primitives but not only. We must represent them in the most possible
generic manner. For instance, several geometric primitives have been
studied in the literature such as Digital Straight Segments (DSS),
arcs of circles, arcs of parabolas and so on. More primitives can also
be constructed and we shall derive a representation sufficiently
generic for all such different basic elements. It appears that the
most generic model is the model of predicates. A predicate, denoted by
$\mathcal{P}$, is a Boolean value function defined on
$\mathcal{SP}(P)$. We write $\mathcal{P}(X)$ if the predicate
$\mathcal{P}$ has value \textit{true} for the path $X$.

In the case of DSS, it has been proved that maximality is a basic
property of digital segments on shapes \cite{FF05,Jaco05}. We believe that if
a subpath is meaningful, but is contained into another subpath also
meaningful, we should retain only the largest one. Hence, we keep
this principle of maximality in our construction. In the context of
predicates, this notion was introduced by Mazurkiewicz \cite{Mazur20}.

\medskip

\begin{definition}
  A set $X$ is \textbf{saturated} with respect to a predicate
  $\mathcal{P}$ if
  \begin{enumerate}
  \item{$\mathcal{P}(X)$ ($X$ has the property corresponding to the
      predicate $\mathcal{P}$),}
  \item{every set $Y\supseteq X$ with $\mathcal{P}(Y)$ is such that
      $Y=X$.}
  \end{enumerate}
\end{definition}

\medskip

In other word, a saturated set is a maximal set with respect to the
inclusion order and for which the predicate is true. Clearly, when we
consider \textit{saturated path}, we also take into account the
connectivity of the points since any path is connected. As a
consequence, we can bound the number of saturated subpath for any
predicate.

\medskip

\begin{lemma}
  Let $P$ be a path with $n$ points. Then, for any predicate
  $\mathcal{P}$, there are at most $O(n)$ saturated subpath in
  $\mathcal{SP}(P)$.
\end{lemma}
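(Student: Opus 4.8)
The plan is to use the fact that a subpath of $P$, being a \emph{contiguous} sublist of $(p_0,\ldots,p_n)$, is completely determined by the pair consisting of its first and its last index (indices read modulo $n+1$ and with a subpath containing at most $n+1$ points when $P$ is closed); equivalently, under the mapping $\varphi$ of the previous section, a subpath is an arc of the unit circle specified by its starting point $\varphi(p_i)$ and its ending point $\varphi(p_j)$, traversed in the direction of increasing index. The first step is the structural remark that, for a \emph{fixed} first index $i$, the subpaths $(p_i,\ldots,p_j)$ with that origin form a chain for inclusion: if $(p_i,\ldots,p_j)$ and $(p_i,\ldots,p_{j'})$ both start at $p_i$, then the shorter one is a sublist of the longer one, hence included in it (and $\varphi$ carries this to an arc inclusion). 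This is precisely the point at which the connectivity, i.e.\ the contiguity, of subpaths is needed --- the chain property would fail for arbitrary $\alpha$-connected subsets of $P$, as Fig.~\ref{fig:subpath} illustrates.

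The second step is to deduce that at most one saturated subpath can have a given first index $i$. Suppose $X$ and $X'$ are both saturated and both start at $p_i$; taking $X$ to be the shorter one, we have $X\subseteq X'$ by the previous step. Since $X'$ is saturated it satisfies $\mathcal{P}(X')$, so condition~(2) in the definition of saturation, applied to $X$ with $Y=X'$, forces $X'=X$. Hence distinct saturated subpaths have distinct first indices. As there are only $n+1$ possible first indices $0,1,\ldots,n$, the set $\mathcal{SP}(P)$ contains at most $n+1=O(n)$ saturated subpaths, which is the claim; the same bound follows symmetrically by classifying saturated subpaths according to their last index instead.

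I do not expect a genuine obstacle here: the work is entirely in fixing conventions so that the counting argument is clean. The two points to be careful with are (i) the closed-path case, where one must pin down what the ``first index'' of a subpath is and keep subpath lengths bounded by $n+1$, so that ``same origin'' is meaningful and the chain property holds, and (ii) reading the inclusion of subpaths as sublist- (equivalently arc-) inclusion rather than set-inclusion of the underlying point sets, a distinction that only matters in the degenerate case where the path $P$ revisits a point. Once these are settled, both steps above are immediate.
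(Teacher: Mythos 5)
Your proof is correct and follows essentially the same argument as the paper: an injection from saturated subpaths into the $n+1$ points of $P$, justified by the fact that subpaths sharing a distinguished index form a chain under inclusion, which saturation forbids. The only (immaterial) difference is that you index each saturated subpath by its first point whereas the paper uses its middle point $m(X)=p_{i+\lfloor k/2\rfloor}$, a choice made there because the same function $m()$ is reused later in the correctness proof of the \texttt{SSD} algorithm.
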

\begin{proof}
  We consider $P$ as oriented positively with respect to the
  increasing order of the indices of its points. Let us consider a
  subpath $X$ of $P$ being composed of the sublist $(p_i, \ldots,
  p_{i+k})$ of points. We define its middle point as $p_m=p_{i+\lfloor
    k/2 \rfloor}$. Then $X$ is reconstructed by adding points around
  $p_m$ starting from the positive orientation. That is: $p_m$,
  $p_{m+1}$, $p_{m-1}$, ... So to any subpath $X$, we can associate
  its middle point $p_m$. This define a function $m(X)=p_m$ from
  $\mathcal{SP}(P)$ to $P$. This function is obviously onto but
  not injective as symmetric subpath around $p_m$ all have the same
  middle point. But, given $p_m$ the subpaths of $P$ having $p_m$ as
  middle points are totally ordered by inclusion. This is due to the
  fact that a subpath has no hole in the indices of its points.
  Hence, if we restrict the domain of $m()$ onto the set of saturated
  subpaths then it becomes injective but loose its
  surjectivity. Indeed, for an arbitrary predicate it is possible that
  no sets $M$ containing $p_m$ might be such that
  $\mathcal{P}(M)$. For instance the predicate \textit{``to be a set
    containing $p_0$''} clearly has only one saturated subpath which
  contains $p_0$. So since $m()$ is injective, a point in $P$ can not
  be used twice as a middle of a saturated subpath and since the
  function is not onto, not all points might be used. Hence, the
  number of points covered by the $m()$ function restricted on
  saturated subpath is clearly bounded by $n$, hence the number of
  saturated subpaths is also bounded by $n$.
\end{proof}
This property is false when applied to sets. For instance, the predicate
\textit{``to have 3 points''} has $O(n^3)$ saturated sets but only $O(n)$
saturated subpaths on $P$. It is important to notice that even if there are
$O(n^2)$ subpaths in $P$, only $O(n)$ at most can be saturated. This implies
that using saturated sets, algorithms with small, indeed linear if possible,
time complexity are more easily reachable.

It is straightforward to see that for saturated subpaths, the inclusion is
forbidden such that the corresponding graph is necessarily proper.

\begin{property}
  The circular-arc graph constructed via the $\varphi$ mapping of the
  saturated subpaths of any predicate $\mathcal{P}$, is a proper circular-arc
  graph.
\end{property}

Let us now describe a simple algorithm which has been introduced in
the context of maximal DSS \cite{FT99}. Our goal is to describe how to
generalize this algorithm in the case of predicates. W.l.o.g., we
suppose the path $P$ to be closed. We start with an arbitrary point,
for instance $p_0$. Let us denote by $S$, the current tested set. The
algorithm can be summarized by the following steps.
\begin{enumerate}
\item {\textit{Increase:} add points to $S$ one by one, always
    starting in the positive orientation. Hence, if $S$ is a subpath
    of the form $\left(p_{m-k},\ldots,p_{m+k}\right)$ then first add
    $p_{m+k+1}$ before $p_{m-k-1}$. If $S$ is of the form
    $\left(p_{m-k+1},\ldots,p_{m+k}\right)$, then add $p_{m-k}$.}
\item {\textit{Check:} if $S$ is not a DSS then remove the last added
    point to $S$, else go back to the \textit{Increase} step.}
\item {\textit{Maximality:} Once $S$ is symmetric and maximal then go
    back to the \textit{Increase} step but only for one
    orientation. In case of ambiguity, the positive orientation must
    be chosen. When the \textit{Increase} step stops again, move to
    the \textit{Restart} step.}
\item {\textit{Restart:} $S$ is a maximal DSS. Suppose that $p_{m+r}$
    is its last point. Then let $S^+ = S \cup \{ p_{m+r+1}\}$. Remove
    points at the negative side of $S^+$ until it becomes a DSS. Then
    go to the \textit{Increase} step to compute another maximal DSS.}
\end{enumerate}

\medskip

This algorithm is proven to build the set of all maximal DSS of the
path $\mathcal{P}$ in linear time \cite{FT99}. It is important to
notice why the preceding algorithm is correct. Two facts are
implicitly used,
\begin{enumerate}
\item {If $S$ is not a DSS then any set $T$ containing $S$ is not a DSS ;}
\item {Two consecutive maximal DSS shares at least one point.}
\end{enumerate}

\medskip

The first property is used at the step \textit{Check} of the algorithm to
break the symmetry of $S$ knowing that $S$ cannot be extended any more on both
sides and it is used at the step \textit{Maximality} because when the one side
extension of $S$ has been built, it is declared to be a maximal DSS. The
second property is used in the step \textit{Restart} to ensure the fact that
the decreasing of $S$ always lead to a DSS and so the algorithm can start
again.

Careful must be taken in the extension of this algorithm to
predicates. Indeed, if we consider any predicate, the question of
finding a subpath $X$ such that $\mathcal{P}(X)$ probably could
necessitate to test all subpaths leading to a quadratic complexity. Of
course, this is practical for small $n$ but our goal is to preserve
the linear complexity of the algorithm during its extension, while
keeping a sufficiently rich class of predicates containing at least
classical geometric primitives.


We will restrict our attention to a special class of predicates in order to
have interesting theoretical properties on the saturated subpaths. To do so,
we only consider the notion of \textbf{conservative} predicate.

\medskip

\begin{definition}
  A predicate $\mathcal{P}$ is conservative if and only if
  $\mathcal{P}(X)$ for a path $X$ then for all $Y\in\mathcal{SP}(X)$,
  $\mathcal{P}(Y)$.
\end{definition}

\medskip

In other words, when an conservative predicate is true on a path, it
is true on all its subpaths. Of course, this also implies ab absurdo
that if $X$ is such that $\neg\mathcal{P}(X)$ then for any set $Y$
containing $X$, we have $\neg\mathcal{P}(Y)$. One important
consequence is that if we know a set $X$ for which the predicate is
true or false then we can decide if there is a saturated set
containing $X$. Hence, the simplest sets to test are the
singletons. Indeed for any point $p$ in $P$, if $\mathcal{P}(\{p\})$
then there exists a saturated set containing $p$ ; else, it is clear
that no saturated set contains the point $p$. This simple property is
sufficient to modify the classical algorithm to obtain an algorithm
valid for any conservative predicates.

It seems important to insist on the fact that the conservative
property is a very natural property for geometric predicates. Indeed,
if a path is a DSS, or an arc of a circle, or arc of a parabola then,
it is natural to have that all subpaths are all DSS, arcs of circles
or arcs of parabolas. The same is true for the negative case: if a set
is not a DSS then clearly no superset can be a DSS !

\medskip

The steps of the new algorithm are as follows.

\begin{enumerate}
\item {\textit{Init:} Let $p$ be the current point. While
    $\neg\mathcal{P}(\{p\})$, go to the next point. Stop either when
    all points in $P$ have been tested or if a point $p$ such that
    $\mathcal{P}(\{p\})$ has been found. In this case go to the
    \textit{Increase} step.}
\item {\textit{Increase:} this step is unmodified.}
\item {\textit{Check:} this step is unmodified (simply use predicate
    $\mathcal{P}$ and not a DSS test).}
\item {\textit{Maximality:} This step is unmodified.}
\item {\textit{Restart:} $S$ is a maximal subpath verifying
    $\mathcal{P}$. Suppose that $p_{m+r}$ is its last point. Then let
    $S^+ = S \cup \{ p_{m+r+1}\}$. If $\neg\mathcal{P}(\{p_{m+r}\})$
    then set $p=p_{m+r+1}$ and go to the \textit{Init} step. Else
    remove points at the negative side of $S^+$ until it verifies
    $\mathcal{P}$. Then go to the \textit{Increase} step to compute
    another saturated subpath.}
\end{enumerate}

\medskip

It is clear that this algorithm, referred as \texttt{SSD} (Saturated Subpaths
Decomposition), can be improved by a test on all singletons. Indeed, each time
a point $p$ is used, if $\neg\mathcal{P}(\{p\})$ then no saturated set
containing $p$ can exist such that we can directly move to the next point in
$P$. This optimization does not change however the complexity of the
algorithm. For the same reason than in \cite{FT99}, this new version of the
classical algorithm still has a linear time complexity in terms of the number
of time the predicate is checked. Hence, for constant time predicate, this
extension still runs in linear time.

\begin{theorem}
  The algorithm \emph{\texttt{SSD}} is correct, that is, it computes all
  saturated subpaths of a conservative predicate $\mathcal{P}$.
\end{theorem}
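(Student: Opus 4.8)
The plan is to prove the two halves of correctness separately: \emph{soundness} (every subpath that \texttt{SSD} reports at a \textit{Restart} step is saturated) and \emph{completeness} (every saturated subpath of $P$ is reported). Call a point $p$ \emph{good} if $\mathcal{P}(\{p\})$ and \emph{bad} otherwise; by the conservative property a bad point belongs to no saturated subpath, and, as noted after the definition of conservative predicate, each good point belongs to at least one. For soundness, observe that whenever \textit{Restart} fires the current set $S$ satisfies $\mathcal{P}$ and can be extended by neither the point following its last index nor the point preceding its first index without falsifying $\mathcal{P}$: this is exactly the information recorded by the \textit{Check} and \textit{Maximality} steps, and it is legitimate precisely because $\neg\mathcal{P}$ propagates to supersets (the contrapositive of conservativity, i.e.\ the generalization of the first implicit fact of the DSS algorithm). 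Any subpath $Y$ with $Y\supsetneq S$ must contain one of these two neighbouring points, so $\mathcal{P}(Y)$ would force $\mathcal{P}$ on that one-point extension by conservativity, a contradiction; hence $S$ is maximal for inclusion among subpaths satisfying $\mathcal{P}$, that is, saturated.

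For completeness I would first prove a structural lemma describing where saturated subpaths live. Scanning $P$ cyclically, the good points form maximal \emph{blocks} separated by (possibly empty) stretches of bad points, and I claim the saturated subpaths split accordingly into \emph{runs}, one per good block: each run is a chain $S_1,\dots,S_k$ of subpaths, consecutive in the cyclic order of their starting indices, such that $S_\ell$ and $S_{\ell+1}$ overlap or are immediately adjacent, and whose union is exactly that block. The argument mirrors the proof of the Lemma: if some point strictly between two cyclically consecutive saturated subpaths were good, the saturated subpath containing it — which by conservativity cannot reach into a bad point, hence stays within the same block — would fall strictly between them, contradicting their consecutiveness; and a saturated subpath ending just before a bad point cannot be followed by another saturated subpath until that bad stretch has been passed.

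The core of completeness is then an induction over runs and, inside each run, over the chain index, tracking the state of \texttt{SSD}. The \textit{Init} step halts exactly at the first good point of a block (scanning positively from the previous stop), so control enters every run; from that point the symmetric growth of \textit{Increase}/\textit{Check} followed by the one-sided growth of \textit{Maximality} produces a subpath satisfying $\mathcal{P}$ that is inextensible on both sides, hence (by the soundness argument) a saturated subpath, and since the entry point is the block's first good point this is $S_1$. For the inductive step, once \textit{Restart} has reported $S_\ell$ it forms $S^+=S_\ell\cup\{p_{\text{next}}\}$ and deletes points from the negative side until $\mathcal{P}$ holds again; this terminates because $p_{\text{next}}$ is good, so the deletion stops no later than the singleton $\{p_{\text{next}}\}$, and it lands on a subpath contained in $S_{\ell+1}$ that contains $p_{\text{next}}$, which \textit{Increase}/\textit{Maximality} then grows to the saturated subpath reached first on the positive side, namely $S_{\ell+1}$ — here the total order of subpaths with a fixed middle point and the injectivity of the middle-point map on saturated subpaths (from the Lemma) guarantee that no saturated subpath is skipped. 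When $S_k$ has been reported the point after it is bad, so \textit{Restart} passes control to \textit{Init}, which skips the bad stretch and enters the next run. Termination, and hence that the cyclic scan returns to its origin after finitely many stages, follows from the $O(n)$ bound of the Lemma together with the bounded work per chain element.

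The main obstacle I anticipate is this completeness direction, and within it two delicate points. First, the structural lemma must be verified carefully, in particular that the negative-side deletion inside \textit{Restart} can neither overshoot $S_{\ell+1}$ nor terminate at some non-saturated intermediate set that the algorithm would then wrongly treat as an output. Second, one must handle the cyclic \emph{seam} at the arbitrary starting point: if the first point chosen lies in the interior of a good block, one has to check that the part of that block's run situated \emph{before} the start is recovered when the scan wraps around, which again reduces to the overlap/adjacency property of consecutive saturated subpaths together with the fact that \textit{Init} never skips a good point.
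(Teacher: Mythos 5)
Your proposal is correct, but it is organized rather differently from the paper's own proof. The paper runs a single induction governed by the middle-point map $m()$ from the Lemma: soundness is absorbed into the earlier discussion of the two implicit facts of the DSS algorithm, and completeness is argued by showing that after each output the \textit{Restart} step necessarily halts to the left of the middle of the next saturated subpath, so that \textit{Increase}/\textit{Check} re-centre on that middle; uniqueness of outputs comes for free from the injectivity of $m()$ on saturated subpaths. You instead separate an explicit soundness half (inextensibility on both sides plus the contrapositive of conservativity applied to one-point extensions --- the correct generalization of the paper's first implicit fact) from a completeness half built on a block/chain structural lemma: good points form maximal blocks, and within a block the saturated subpaths form a chain, ordered by starting index, in which consecutive members overlap \emph{or are merely adjacent}. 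That last weakening is a genuine contribution of your route: the paper's second implicit fact (``two consecutive maximal DSS share at least one point'') is false for general conservative predicates (e.g.\ ``is a singleton''), and your block/chain lemma is exactly what replaces it; the paper handles this only implicitly through the case split on whether $p_{j+1}$ satisfies $\mathcal{P}(\{p_{j+1}\})$. What the paper's middle-point formulation buys in exchange is a direct link to the counting Lemma and an immediate uniqueness argument. Both proofs sit at the same level of rigor and leave the same fine points open, which you honestly flag: that the negative-side deletion in \textit{Restart} lands exactly at the starting index of the next saturated subpath (this does follow, since any saturated subpath containing the deletion's result must start strictly after the start of the one just output, and the deletion cannot stop earlier than the largest admissible set), and the treatment of the cyclic seam, which the paper itself defers to the Discussion section.
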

\begin{proof}
  Let us suppose w.l.o.g. that $\mathcal{P}(\{p_0\})$. There are two
  cases: either there exists a saturated subpath $S_0$ such that
  $m(S_0)=p_0$, or there is a saturated subpath $S'_0$ which
  contains $p_0$ but not as a middle. In the second case, several such
  subpaths might exist and $S'_0$ is chosen to be the one with the
  rightmost middle. In the first case, the \textit{Increase} and
  \textit{Check} steps are built to find $S_0$ so the algorithm is
  correct if and only if it correctly moves to the next saturated
  subpath. 

  In the second case, the \textit{Increase-Check} steps build a set
  $S$ whose middle is $p_0$ such that $\mathcal{P}(S)$. Then the
  \textit{Maximality} step is used. Suppose that $S$ has the form
  $(p_i,\ldots,p_j)$. There are two cases : either
  $\mathcal{P}(\{p_i,\ldots,p_{j+1}\})$ or
  $\neg\mathcal{P}(\{p_i,\ldots,p_{j+1}\})$. In the first case, the
  \textit{Maximality} step implies that the \textit{Increase} step is
  pursued on the positive side and in the second case, on the negative
  side. On both sides, a saturated set is built with a rightmost
  middle, that is $S'_0$. Hence, once again, the algorithm is correct
  if and only if it correctly moves to the next saturated subpath.
  
  So the key point is the \textit{Restart} step. Let
  $S=(p_i,\ldots,p_j)$ be the last computed saturated subpath. Either
  $\mathcal{P}(\{p_{j+1}\})$ or $\neg\mathcal{P}(\{p_{j+1}\})$. In the
  first case, there exists a saturated set containing $p_{j+1}$. Since
  this point was not in $S$, it is clear that its middle is at the
  right of $m(S)$. Hence the \textit{Restart} step stops necessarily
  at the left of this middle point and then the
  \textit{Increase-Check} steps are used to find the correct middle
  point. 
  
  In the second case, no saturated subpaths contain $p_{j+1}$ thus,
  the \textit{Restart} step move to $p_{j+2}$ and the \textit{Init}
  step is used to find a new point belonging to a saturated
  subpath. So, no saturated subpath can be missed and thus the
  algorithm is correct provided than it stops when
  all points have been tested.
\end{proof}

\section{Discussion}
In the proof of Theorem 1, we mainly used the middle point of a subpath. This is the common way to uniquely identified a saturated subpath. However, it is possible, in practice, to use only forward incremental recognition. Indeed, this leads to the same Tangential Cover with the only careful check of the first computed primitive. Starting at any point of the curve does not imply that this point is a beginning point of a saturated subpath. Hence, to complete the construction of the Tangential Cover, this first computed primitives should be tested for maximality and suppressed if it is not maximal. This means that the algorithm must be pursued until the end of the first recognized primitive is reached. Obviously the complexity remains the same. 

We must also remark that our proof does not rely on the adjacency $\alpha$. Indeed, we only use the indices of the points. Hence, an adjacency can be based solely on a numbering of the points. Hence, the points does not need to be connected in the usual sense, they are connected by indices only. Of course, we should add that a good geometric predicate must be true on any singletons. Giving a numbering is a more profound property than connectivity in fact. The only case to forbid is the repetition of points which consecutive indices. This might be a problem for computing the saturation of subpaths. Indeed, at first saturation of paths might be viewed as a set maximality for inclusion, but it is more general to have a maximality with respect to connected subsets of the set of indices. In that case, maximality does not imply that the points in paths are different between saturated sets but this avoid useless complexity in the decomposition.

So to conclude, the original set of points can described very general curves, allowing non simplicity and non connectivity. This is not a problem since the canonical extension $P_\text{can}$ is properly defined on both cases.

\section{Conclusion}
In this short note, we clarified the generalization of the Tangential Cover to arbitrary geometric predicates using the usual notion of conservatism. This work was done in 2009 but never published. When reading the source code of the DGtal library \cite{DGtal}, I have discovered that concepts of segments (\texttt{CForwardSegmentComputer}) where supposed to be true in specific subsets of a segment. Hence, I have decided to write this note to prove that the size complexity of this generalized Tangential Cover is still linear.


\begin{thebibliography}{1}
\bibitem[1]{KleRos04} R. Klette and A. Rosenfeld, \emph{Digital Geometry -
    Geometric Methods for Picture Analysis}, Morgan Kaufmann, San Francisco,
  2004.
\bibitem[2]{Mazur20} S. Mazurkiewicz, \emph{Sur les lignes de Jordan},
  Fun. Math., vol 1, 1920, pp. 166-209.
\bibitem[3]{FT99} F. Feschet and L. Tougne, \emph{Optimal Time
    Computation of the Tangent of a Discrete Curve: Application to the
    Curvature}, in Proc. Intl. Conf. DGCI, LNCS vol 1568, pp.~31-40,
  Springer-Verlag, 1999.
\bibitem[4]{FF05} F. Feschet, \emph{Canonical representations of
    discrete curves}, Pattern Anal. Appl., vol. 8, no. 1-2, pp.~84-94,
  2005.
\bibitem[5]{Haj08} A. Hajdu and I. Pitas, ``Piecewise Linear Digital Curve
  Representation and Compression Using Graph Theory and a Line Segment
  Alphabet'', IEEE Trans. on Image Processing vol. 17, no. 2, pp.~126-133,
  2008.
\bibitem[6]{Kle06} G. Klette, ``Branch Voxels and Junctions in 3D skeletons.'' in
  Proc. Intl. Conf. 10th IWCIA 2006, Berlin, Germany, LNCS vol. 4040, Springer,
  Berlin, pp. 34–-44, 2006
\bibitem[7]{Schrij03} A. Schrijver, \emph{Combinatorial Optimization, Polyhedra
    and Efficiency}, Springer-Verlag, Berlin, 2003.
\bibitem[8]{Frechet28} M. Frechet, \emph{Les Espaces Abstraits},
  Gauthier-Villars, 1928 (reprint by J. Gabay Edition 1989 ISBN:
  2-87647-056-X).
\bibitem[9]{Huo05} X. Huo, and J. Chen, ``JBEAM: multiscale curve coding via
  beamlets.'', IEEE Trans. on Image Processing, vol. 14, no. 11,
  pp. 1665–-1677, 2005.
\bibitem[10]{Alexandrov89} A.D. Alexandrov, and Yu G. Reshetnyak, \emph{General theory of irregular curve},
Kluwer Academic Publishers, 1989.
\bibitem[11]{Jaco05} J.{-}O. Lachaud, and A. Vialard, and F de Vieilleville, \emph{Analysis and Comparative Evaluation of Discrete Tangent Estimators}, Discrete Geometry for Computer Imagery, 12th International Conference, pp. 240--251, 2005.
\bibitem[12]{DGtal} {D}{G}tal: Digital Geometry tools and algorithms library,  \url{https://dgtal.org/}, version 0.9.4.1., June 2018.
\bibitem[13]{Debyy} P. Ngo, and I. Debled-Rennesson, and B. Kerautret, and H. Nasser, \emph{Analysis of Noisy Digital Contours with Adaptive Tangential Cover}, Mathematical Imaging and Vision 59(1): 123-135 (2017).
\bibitem[14]{Debxx} I. Debled-Rennesson, and F. Feschet, and J. Rouyer-Degli, \emph{Optimal blurred segments decomposition of noisy shapes in linear time}, Computers \& Graphics 30(1): 30-36 (2006).
\bibitem[15]{kerxx} B. Kerautret, and J-O. Lachaud, and M. Said, \emph{Meaningful Thickness Detection on Polygonal Curve}, ICPRAM (2): 372-379 (2012).

\end{thebibliography}
\end{document}